\newif\ifjcgs
\jcgstrue
\jcgsfalse

\ifjcgs
  \documentclass[12pt]{article}
\else
  \documentclass{article}
\fi

\usepackage{arxiv}
\renewcommand{\headeright}{}
\renewcommand{\undertitle}{}
\usepackage[T1]{fontenc}
\usepackage[utf8]{inputenc}
\usepackage{amsmath}
\usepackage{amssymb}
\usepackage{amsthm}
\usepackage{graphicx}
\usepackage{algorithm}
\usepackage{booktabs}
\usepackage{multirow}
\usepackage{dcolumn}
\usepackage[round]{natbib}
\usepackage{microtype}
\usepackage{setspace}
\usepackage{xcolor}
\usepackage{hyperref}
\hypersetup{
  pdftitle={Direct Simulation of Poisson and Newton Kernel Distributions},
  pdfkeywords={Directional statistics; Exact sampling; Power-Saw family; Random chord; Spherical distribution},
  colorlinks=true,
  linkcolor=blue,
  citecolor=blue,
  urlcolor=blue
}

\hypersetup{pdfauthor={Lukas Sablica; Kurt Hornik}}

\newtheorem{lemma}{Lemma}
\newtheorem{theorem}{Theorem}
\newtheorem{corollary}{Corollary}

\newcounter{algln}
\newcommand{\algkw}[1]{\textup{\textbf{#1}}}
\newenvironment{algsteps}
  {\setcounter{algln}{0}%
   \begin{list}{\textup{\arabic{algln}.}}%
     {\usecounter{algln}%
      \setlength{\leftmargin}{2.0em}\setlength{\labelwidth}{1.4em}%
      \setlength{\labelsep}{0.6em}\setlength{\itemsep}{2pt}%
      \setlength{\topsep}{3pt}\setlength{\parsep}{0pt}}}
  {\end{list}}

\newcommand{\Sd}{\mathbb{S}^{d-1}}
\newcommand{\R}{\mathbb{R}}
\newcommand{\norm}[1]{\lVert #1\rVert}
\newcommand{\tr}{^{\top}}

\title{Direct Simulation of Poisson and Newton Kernel Distributions}

\author{
  \normalfont
  \begin{minipage}[t]{0.47\textwidth}
    \centering
    \textbf{Lukas Sablica}\\
    {\fontsize{10}{12}\selectfont
      Institute for Statistics and Mathematics\\
      WU Vienna University of Economics and Business\\
      Austria}\\
    ORCID: 0000-0001-9166-4563\\
    \href{mailto:Lukas.Sablica@wu.ac.at}{\nolinkurl{Lukas.Sablica@wu.ac.at}}%
    \ifjcgs\\Corresponding author\fi
  \end{minipage}%
  \hspace{0.02\textwidth}%
  \begin{minipage}[t]{0.47\textwidth}
    \centering
    \textbf{Kurt Hornik}\\
    {\fontsize{10}{12}\selectfont
      Institute for Statistics and Mathematics\\
      WU Vienna University of Economics and Business\\
      Austria}\\
    ORCID: 0000-0003-4198-9911\\
    \href{mailto:Kurt.Hornik@wu.ac.at}{\nolinkurl{Kurt.Hornik@wu.ac.at}}
  \end{minipage}
}

\begin{document}
\raggedbottom
\maketitle
\ifjcgs
  \doublespacing
  \vspace{1.25em}
\fi

\begin{abstract}
The Poisson kernel-based (PKB) distribution is a flexible and numerically convenient
family on the unit sphere used for model-based clustering, uniformity testing,
and directional data analysis. Available random variate generators rely on
rejection from a von Mises--Fisher, angular central Gaussian, or projected Saw
envelope and therefore require a rejection loop, tuning or root-finding, or a
univariate setup phase. We show that none is needed. A uniformly random line
through the interior point $\rho\mu$ meets the sphere in two points; selecting
one with probability proportional to the opposite segment length returns
exactly the target density. The sampler is exact, rejection- and tuning-free,
uses one uniform direction and one Bernoulli draw, and costs $O(d)$ operations.
We place the distribution in the power-Saw family of densities proportional to
$\norm{x-\rho\mu}^{-\xi}$ and show that, for $d\geq3$, the same construction is
exact for one further exponent. This introduces the Newton kernel-based (NKB)
distribution, whose normalizing constant is identically one. Across a wide grid
of dimensions, concentrations, and sample sizes, the direct sampler is faster
than the better of two previously recommended methods in every case, with the
largest gains at moderate concentration.
\end{abstract}

\keywords{Directional statistics \and Exact sampling \and Power-Saw family
\and Random chord \and Spherical distribution}

\section{Introduction}

Directional data, observations on the unit sphere $\Sd=\{x\in\R^d:\norm{x}=1\}$,
arise throughout the sciences, and a growing methodology models them with
rotationally symmetric distributions \citep{MardiaJupp2000}.
\citet{GolzyMarkatou2020} introduced the Poisson kernel-based (PKB) distribution for
model-based clustering on the sphere. Unlike most spherical distributions, its
density is elementary and free of special functions, which makes it especially
attractive for mixture modeling and has led to its rapid adoption in software
and methodology for spherical clustering, uniformity testing and directional
data analysis \citep{DingMarkatouSaraceno2025, SaracenoMukhopadhyayMarkatou2025,
SablicaHornikGruen2025, TsagrisPapastamoulisKato2025}. The same numerical
simplicity that recommends the distribution for modeling also makes a fast and
reliable generator valuable, much as von Mises-Fisher mixtures became practical
once their fitting and simulation were automated \citep{HornikGrun2014}.

Such applications need a generator. Random draws underpin parametric bootstrap
inference \citep{EfronTibshirani1993}, including the likelihood ratio test for
the number of mixture components \citep{McLachlan1987}, Bayesian predictive
simulation, and Monte Carlo evaluation of estimators. The quality of the
generator, in particular in high dimensions, is therefore of direct practical
interest.

The PKB distribution is, however, not trivial to sample.
\citet{GolzyMarkatou2020} proposed rejection from a von Mises-Fisher envelope,
and \citet{SablicaHornikLeydold2023} developed and optimized rejection samplers
for the PKB distribution. They showed that the von Mises-Fisher envelope is not
uniformly efficient as the concentration approaches its upper limit, optimized
the envelope, and gave two superior alternatives based on angular central Gaussian envelopes
\citep{Tyler1987} and on adaptive rejection sampling from a projected Saw
distribution \citep{Saw1978, Ulrich1984}. All of these are rejection methods.
Each pays for its generality with a rejection loop whose expected cost exceeds
one, a tuning or root-finding step, or a univariate set-up phase. The general
theory of such methods is given in \citet{Devroye1986}. 

There is little reason to incur any of these costs. The present work takes a fundamentally different approach and shows that the Poisson kernel-based distribution can be sampled directly. Let $a$ denote the interior point of the unit ball determining the distribution. A single uniformly random line
through $a$ meets the sphere in two points, and if one of them is chosen with
probability proportional to the length of the opposite segment, the chosen
point has exactly the target density. Every draw is accepted, no envelope is
constructed, and the procedure reduces to one spherical-uniform draw, one
Bernoulli draw and $O(d)$ arithmetic. Thus the contribution is a conceptual replacement
of the earlier algorithms, not merely their next speed improvement. This places the PKB distribution on the same footing as the spherical Cauchy distribution, which alone among the standard
spherical models already admits an equally simple sampler, through a M\"obius
transformation of a uniform point \citep{KatoMcCullagh2020}.

The chord construction also has a classical potential-theoretic background.
Malmheden's theorem and modern treatments relate chord averages in balls to the
Dirichlet problem \citep{Malmheden1934,AgranovskyKhavinsonShapiro2010}. Here we
use the same elementary line geometry for exact random generation on the sphere.

Two further points organize the paper. First, the construction is not an
isolated trick. It belongs to the power-Saw family of spherical densities
proportional to $\norm{x-\rho\mu}^{-\xi}$, of which the PKB
distribution is the member $\xi=d$. Within this family the uniform-line
construction is exact for exactly one other exponent, $\xi=d-2$, when $d\geq3$, and we use that
exponent to introduce a companion family, the Newton kernel-based (NKB) distribution,
whose normalizing constant equals one for every concentration and dimension $d\geq3$.
Second, the direct sampler is not merely simpler in principle. In a simulation
study spanning dimensions from $3$ to $1000$, concentrations from $0.01$ to
$0.99$ and sample sizes from $10$ to $10^4$, it is faster than the better of the
two state-of-the-art methods in every cell, with the largest
gains precisely where those methods are individually weakest.

\section{The Poisson kernel-based distribution and the power-Saw family}

Let $\upsilon_d$ denote the uniform probability measure on $\Sd$. For
$0\le\rho<1$ and $\mu\in\Sd$, the PKB distribution has density
\begin{equation}\label{eq:pkbd}
  f(x\mid\rho,\mu)=\frac{1-\rho^2}{\norm{x-\rho\mu}^{d}},
  \qquad x\in\Sd,
\end{equation}
with respect to $\upsilon_d$. At $\rho=0$ it is uniform, and as $\rho\to1-$ it
concentrates at $\mu$. The single interior point $a=\rho\mu$ controls both the
mean direction $\mu$ and the concentration $\rho=\norm{a}$.

\citet{Saw1978} studied the rotationally symmetric densities on $\Sd$
proportional to $g(\lambda\mu\tr x)$, where $g\ge0$ is a shape function and
$\lambda\ge0$ a concentration parameter. The von Mises-Fisher and Watson
families are familiar instances. Writing
\begin{equation}\label{eq:reparam}
  \norm{x-\rho\mu}^2=1+\rho^2-2\rho\,\mu\tr x
  =(1+\rho^2)\bigl(1-\lambda\,\mu\tr x\bigr),
  \qquad
  \lambda=\frac{2\rho}{1+\rho^2},
\end{equation}
shows that \eqref{eq:pkbd} is a Saw density with shape $g(t)=(1-t)^{-d/2}$, mean
direction $\mu$ and concentration $\lambda=\lambda(\rho)$. The map
$\rho\mapsto\lambda$ increases from $0$ to $1$ on $[0,1)$, so the concentration
may be read from either parameter.

More generally, for an exponent $\xi>0$ consider the densities proportional to
$\norm{x-\rho\mu}^{-\xi}$, with normalizing constant
\begin{equation}\label{eq:Cxi}
  C_\xi(\rho)=\int_{\Sd}\norm{x-\rho\mu}^{-\xi}\,d\upsilon_d(x),
\end{equation}
which is independent of $\mu$ by rotational invariance. By \eqref{eq:reparam}
these are Saw densities with shape $g(t)=(1-t)^{-\xi/2}$, and we call the
resulting collection the power-Saw family. The PKB
distribution is the member $\xi=d$, with $C_d(\rho)=(1-\rho^2)^{-1}$. The family
is of independent interest. It is the exit distribution of Brownian motion from
the ball for $\xi=d$ \citep{Durrett1984, KatoJones2013}, and it contains the
spherical Cauchy distribution at $\xi=2(d-1)$ \citep{KatoMcCullagh2020}. The
spherical Cauchy distribution can be sampled by a M\"obius transformation of a
uniform point, but no comparably direct generator has been available for the
Poisson exponent. The next section supplies one, and Section~\ref{sec:newton} identifies the
single further exponent for which the same idea is exact.

\section{A direct sampler}

Fix $a=\rho\mu$ with $0<\rho<1$. The case $\rho=0$ is uniform. For a direction
$v\in\Sd$, the line $a+\R v$ meets the sphere in the two points
\begin{equation}\label{eq:endpoints}
  x_+(v)=a+r_+(v)\,v,
  \qquad
  x_-(v)=a-r_-(v)\,v,
\end{equation}
where, with $c(v)=a\tr v$ and $\Delta(v)=\{c(v)^2+1-\rho^2\}^{1/2}$,
\begin{equation}\label{eq:radii}
  r_+(v)=\Delta(v)-c(v),
  \qquad
  r_-(v)=\Delta(v)+c(v).
\end{equation}
Both are positive, and
\begin{equation}\label{eq:rsident}
  r_+(v)\,r_-(v)=1-\rho^2,
  \qquad
  r_+(v)+r_-(v)=2\Delta(v).
\end{equation}
The construction rests on two facts. The first is the density of the point at
which a uniform ray from $a$ hits the sphere.

\begin{lemma}\label{lem:solid}
Let \(V\sim\upsilon_d\) and let \(X=a+r_+(V)V\) be the point at which the ray
from \(a\) in direction \(V\) meets \(\Sd\). Then \(X\) has density
\begin{equation}\label{eq:solid}
  q(x\mid\rho,\mu)=\frac{1-a\tr x}{\norm{x-a}^{d}},
  \qquad x\in\Sd,
\end{equation}
with respect to $\upsilon_d$.
\end{lemma}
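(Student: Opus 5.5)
The proof is a change of variables: write down the inverse map from the sphere back to directions, and compute the Jacobian between the two surface measures. Since $\norm{a}<1$, the map $v\mapsto x_+(v)$ is a bijection of $\Sd$ onto itself. Its inverse is the radial projection from $a$,
\begin{equation*}
  x\mapsto v(x)=\frac{x-a}{\norm{x-a}} .
\end{equation*}
Both maps are smooth, because $x-a\neq 0$ for every $x\in\Sd$ and $r_+(v)>0$ by \eqref{eq:radii}. It therefore suffices to show that the pullback of $\upsilon_d$ under $x\mapsto v(x)$ has density $q$ with respect to $\upsilon_d$. Equivalently, we need the ratio of the surface element $d\sigma(v)$ to $d\sigma(x)$.

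The Jacobian comes from the classical solid-angle formula, which I would derive as follows. The map $x\mapsto (x-a)/\norm{x-a}$, restricted to a hypersurface $S$ not containing $a$, pulls back the surface measure of $\Sd$ to
\begin{equation*}
  \frac{\lvert n(x)\tr (x-a)\rvert}{\norm{x-a}^{d}}\, d\sigma_S(x),
\end{equation*}
where $n(x)$ is the unit normal to $S$. The short derivation uses polar coordinates about $a$: a surface patch at distance $r=\norm{x-a}$ whose normal makes angle $\theta$ with the viewing direction subtends solid angle $\cos\theta\, d\sigma_S/r^{d-1}$. Here $\cos\theta=\lvert n\tr (x-a)\rvert/r$, which gives the total factor $\lvert n\tr (x-a)\rvert/r^{d}$. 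A formal version takes the derivative of $x\mapsto(x-a)/\norm{x-a}$ on the tangent space $T_xS$ and computes its determinant. That derivative is $r^{-1}$ times the orthogonal projection onto $(x-a)^\perp$, restricted to $T_xS=n^\perp$. This gives $r^{-(d-1)}$ times the cosine between $n$ and $x-a$.

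Specialising to $S=\Sd$ gives $n(x)=x$, so $n\tr(x-a)=1-a\tr x$. This is positive because $\lvert a\tr x\rvert\le\rho<1$, so the absolute value can be dropped. Both $\upsilon_d$ and $\sigma$ on $\Sd$ carry the same total mass normalisation, so the constant $\sigma(\Sd)$ cancels. Hence $X=x_+(V)$ has density $(1-a\tr x)/\norm{x-a}^{d}$ with respect to $\upsilon_d$, which is \eqref{eq:solid}.

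As a sanity check, $\int q\,d\upsilon_d=1$ automatically, since the map is a bijection. The main obstacle is the rigorous Jacobian determinant, and I would handle it with the projection argument above: compute the singular values of the derivative on $n^\perp$ via an orthonormal basis adapted to the plane spanned by $n$ and $x-a$.
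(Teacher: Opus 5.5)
Your proof is correct, but it follows a different route from the paper's.

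The paper never computes a $(d-1)$-dimensional Jacobian. It uses rotational symmetry about $\mu$ to reduce to the one-dimensional marginals $T=\mu\tr V$ and $U=\mu\tr X$, which are related by $T=(U-\rho)/(1+\rho^2-2\rho U)^{1/2}$. It then makes a single-variable change of variables in the known density $g_d(t)\propto(1-t^2)^{(d-3)/2}$. The identities $dT/dU=(1-\rho U)/r^3$ and $1-T^2=(1-U^2)/r^2$ make the factor $(1-U^2)^{(d-3)/2}$ cancel on division by $g_d(U)$, which leaves $(1-\rho U)/r^{d}$.

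You instead compute the Jacobian of the radial projection $x\mapsto(x-a)/\norm{x-a}$ directly on tangent spaces. You write it as $r^{-(d-1)}$ times the determinant of the orthogonal projection $n^\perp\to(x-a)^\perp$, which equals the cosine $|n\tr(x-a)|/r$. This is exactly the classical solid-angle factor that the paper only cites, via Slump and Hoenders, in the remark after its proof.

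What each route buys:
\begin{itemize}
\item The paper's route is more elementary: one-variable calculus plus a standard marginal density. It does, however, rely on the axial symmetry of the configuration and on the fact that a rotationally symmetric law has a density depending only on $\mu\tr x$, equal to the ratio of the two $\mu$-marginal densities. The needed bijectivity is left implicit in the sign of $dT/dU$.
\item Your route needs one small linear-algebra fact, the determinant of a projection between hyperplanes, which your adapted-basis argument handles. In exchange it makes the bijection $v\mapsto x_+(v)$ and its inverse explicit. It also yields the density formula for any hypersurface and any interior point without any symmetry argument.
\end{itemize}
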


\begin{proof}
By rotational symmetry about $\mu$ it suffices to track the marginal along
$\mu$. Put $T=\mu\tr V$ and $U=\mu\tr X$. For a uniform point on the sphere $T$
has density $g_d(t)=c_d(1-t^2)^{(d-3)/2}$ on $(-1,1)$ \citep{MardiaJupp2000}, with
$c_d=\Gamma(d/2)/\{\sqrt{\pi}\,\Gamma((d-1)/2)\}$. Writing $r=\norm{X-a}$ and
using $V=(X-a)/r$,
\[
  T=\frac{U-\rho}{r}=\frac{U-\rho}{(1+\rho^2-2\rho U)^{1/2}}.
\]
Differentiation gives $dT/dU=(1-\rho U)/r^3$ and $1-T^2=(1-U^2)/r^2$. By the
change of variables formula, the density of $U$ induced by a uniform $V$ is
\[
  g_d(T(U))\,\Bigl|\frac{dT}{dU}\Bigr|
  =c_d(1-U^2)^{(d-3)/2}\,\frac{1-\rho U}{r^{d}}.
\]
Dividing by $g_d(U)$, the uniform marginal of $\mu\tr X$, gives the density of
$X$ with respect to $\upsilon_d$, which at $x$ with $U=\mu\tr x$ and
$r=\norm{x-a}$ equals \((1-\rho U)/r^{d}=(1-a\tr x)/\norm{x-a}^{d}.\)
\end{proof}

The factor \((1-a^\top x)/\norm{x-a}^{d}\) is the usual differential
solid-angle factor for radial projection from the interior point \(a\) onto
the unit sphere. In the notation of \citet{SlumpHoenders1985}, their formula is
\((\boldsymbol n^\top\boldsymbol r)/\|\boldsymbol r\|^d\,d\sigma\), where
\(\boldsymbol r\) is the vector from the projection center to the surface
element. Taking \(\boldsymbol r=x-a\) and \(\boldsymbol n=x\) gives the factor
above.

The second fact is a chord identity. For $x$ on the sphere with $r=\norm{x-a}$
and $s$ the distance from $a$ to the opposite intersection point, $rs=1-\rho^2$
by \eqref{eq:rsident}, so
\begin{equation}\label{eq:chord}
  r(r+s)=r^2+rs=(1+\rho^2-2a\tr x)+(1-\rho^2)=2(1-a\tr x).
\end{equation}
Selecting $x$ against the opposite segment therefore contributes the factor
\begin{equation}\label{eq:select}
  \frac{s}{r+s}=\frac{rs}{r(r+s)}=\frac{1-\rho^2}{2(1-a\tr x)}.
\end{equation}

\begin{theorem}\label{thm:pkbd}
Let $V\sim\upsilon_d$, form $x_+(V)$ and $x_-(V)$ by
\eqref{eq:endpoints}-\eqref{eq:radii}, and given $V$ return $X=x_+(V)$ with
probability $r_-(V)/\{r_+(V)+r_-(V)\}$ and $X=x_-(V)$ otherwise. Then $X$ has the
Poisson kernel-based density \eqref{eq:pkbd}.
\end{theorem}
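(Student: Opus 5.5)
The plan is to decompose the law of $X$ into two contributions, one from choosing the ``forward'' endpoint $x_+(V)$ and one from choosing the ``backward'' endpoint $x_-(V)$. I will then show that each contribution equals half of the target density \eqref{eq:pkbd}. The key observation is that $x_-(V)=x_+(-V)$, since the line through $a$ in direction $-V$ is the same line with the roles of the two radii exchanged: $r_\pm(-V)=r_\mp(V)$. Because $-V\sim\upsilon_d$ whenever $V\sim\upsilon_d$, both endpoints, viewed as functions of a uniform direction, are ray-hitting points from $a$. Lemma~\ref{lem:solid} therefore gives each of them the density $q(x\mid\rho,\mu)$.

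Next I express the selection probability as a function of the chosen point rather than of $V$. If $X=x_+(V)=x$, then the near distance is $r=r_+(V)=\norm{x-a}$ and the opposite distance is $s=r_-(V)$. The probability of selecting $x$ is $r_-/(r_++r_-)=s/(r+s)$, which by \eqref{eq:select} equals $(1-\rho^2)/\{2(1-a\tr x)\}$. This depends only on $x$. Similarly, if $X=x_-(V)=x$, then $r=r_-(V)$ and $s=r_+(V)$, and the selection probability $r_+/(r_++r_-)$ is again $s/(r+s)$, the same function of $x$.

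Writing, for a bounded measurable test function $h$,
\[
  E\,h(X)=E\Bigl[h(x_+(V))\,\tfrac{r_-(V)}{r_+(V)+r_-(V)}\Bigr]+E\Bigl[h(x_-(V))\,\tfrac{r_+(V)}{r_+(V)+r_-(V)}\Bigr],
\]
I substitute $V\mapsto -V$ in the second term so that it becomes identical to the first. The first term is an expectation of a function of $x_+(V)$ alone, so Lemma~\ref{lem:solid} turns it into
\[
  \int_{\Sd} h(x)\,\frac{1-a\tr x}{\norm{x-a}^d}\cdot\frac{1-\rho^2}{2(1-a\tr x)}\,d\upsilon_d(x).
\]
The factors $1-a\tr x$ cancel, leaving half of $\int h f$. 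Doubling the first term gives $E\,h(X)=\int h f\,d\upsilon_d$, which proves the claim.

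The main obstacle is the step that rewrites the selection weight as a function of the endpoint alone. This requires that the map $v\mapsto x_+(v)$ be a bijection from $\Sd$ to $\Sd$, so that conditioning on $X=x$ determines the chord, with $r$ and $s$ then fixed by $x$ via $rs=1-\rho^2$. The bijection holds because $a$ is interior, so each $x\in\Sd$ is hit by exactly one ray from $a$. I will also note that $1-a\tr x>0$ since $\rho<1$, which makes the cancellation legitimate.
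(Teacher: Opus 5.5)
Your proposal is correct and is essentially the paper's proof: the symmetry $x_-(V)=x_+(-V)$ with $-V\sim\upsilon_d$ gives the factor $2$, Lemma~\ref{lem:solid} gives the solid-angle density, and \eqref{eq:select} gives the selection factor that cancels $1-a\tr x$; your test-function formulation makes the paper's bookkeeping explicit. The bijectivity of $v\mapsto x_+(v)$ is not needed to rewrite the weight, because $r_-(V)/\{r_+(V)+r_-(V)\}=(1-\rho^2)/\{2(1-a\tr x_+(V))\}$ holds pointwise in $V$ by \eqref{eq:chord}.
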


\begin{proof}
A point $x$ is reached as $x_+(v)$ when $V=v$ and as $x_-(-v)$ when $V=-v$, and
both directions are equally likely under $\upsilon_d$, contributing a factor
$2$. With $r=\norm{x-a}$ and $s$ as above, \eqref{eq:chord} gives
$1-a\tr x=r(r+s)/2$, so the solid-angle density \eqref{eq:solid} equals
$q(x\mid\rho,\mu)=(r+s)/(2r^{d-1})$. For $x=x_+(v)$ the own and opposite segments
are $r=r_+(v)$ and $s=r_-(v)$, so the selection probability
$r_-(v)/\{r_+(v)+r_-(v)\}$ equals \eqref{eq:select}. Multiplying the three
contributions,
\[
  2\cdot\frac{r+s}{2r^{d-1}}\cdot\frac{s}{r+s}
  =\frac{s}{r^{d-1}}
  =\frac{1-\rho^2}{r^{d}}
  =\frac{1-\rho^2}{\norm{x-a}^{d}},
\]
using $s=(1-\rho^2)/r$. This is \eqref{eq:pkbd}.
\end{proof}

The sampler is summarized in Algorithm~\ref{alg:pkbd}. The uniform direction is
obtained as a normalized Gaussian vector, the standard device for the spherical
uniform distribution \citep{HormannLeydoldDerflinger2004}. The cost is one such
vector, one uniform variate and $O(d)$ arithmetic, and there is no rejection
loop and no precomputation. For numerical stability near \(\rho=1\), the shorter
segment should not be computed by subtracting nearly equal numbers. Step 3 of
Algorithm~\ref{alg:pkbd} instead uses \(r_+r_-=1-\rho^2\) explicitly.

\begin{algorithm}[H]
\caption{Direct sampler for the PKB distribution with
parameters $\rho$ and $\mu$ in dimension $d$}
\label{alg:pkbd}
\begin{algsteps}
\item Draw $Z\sim N_d(0,I_d)$ and set $V=Z/\norm{Z}$.
\item Set $M=(1-\rho)(1+\rho)$, $C=\rho\,\mu\tr V$, and
$\Delta=(C^2+M)^{1/2}$.
\item \algkw{if} $C\geq0$ \algkw{then} set $r_-=\Delta+C$ and $r_+=M/r_-$;
\algkw{otherwise} set $r_+=\Delta-C$ and $r_-=M/r_+$.
\item Draw $U\sim\mathrm{Un}(0,1)$, independently of $V$.
\item \algkw{if} $U\le r_-/(r_++r_-)$ \algkw{then return} $X=\rho\mu+r_+V$,
\algkw{otherwise return} $X=\rho\mu-r_-V$.
\end{algsteps}
\end{algorithm}

\section{The Newton kernel-based distribution}
\label{sec:newton}

The argument of Theorem~\ref{thm:pkbd} used the exponent $\xi=d$ only at the
last step. Carrying a general exponent through the same computation shows when
the construction stays exact. Put $\alpha=d-1-\xi$. If the endpoint is selected
with probability proportional to a power $r^\alpha$ of its own distance, the
density induced at $x$ is proportional to $r^{\alpha-d+1}$ divided by the
chord-direction weight $\{r_+(v)^\alpha+r_-(v)^\alpha\}/\{r_+(v)+r_-(v)\}$
evaluated along the line. Uniform directions are exact precisely when this
weight is constant.

\begin{corollary}\label{cor:two}
For $d\geq3$, within the power-Saw family and for $0< \rho <1$, the uniform-line construction with endpoint selection proportional to a fixed power of the segment length is exact and free
of rejection for exactly two exponents, $\xi=d$ and $\xi=d-2$. The case $\xi=d$
is the Poisson kernel-based distribution, with selection proportional to the
opposite segment. The case $\xi=d-2$ has selection proportional to the own
segment, and its normalizing constant is identically one,
\begin{equation}\label{eq:newtonconst}
  \int_{\Sd}\norm{x-\rho\mu}^{-(d-2)}\,d\upsilon_d(x)=1,
  \qquad 0\le\rho<1.
\end{equation}
\end{corollary}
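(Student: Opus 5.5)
The plan is to redo the computation in the proof of Theorem~\ref{thm:pkbd} with a general selection exponent. Given $V=v$, select $x_+(v)$ with probability $r_+(v)^\alpha/\{r_+(v)^\alpha+r_-(v)^\alpha\}$ and $x_-(v)$ otherwise, for some real $\alpha$.

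\textbf{The induced density.} A point $x$ with own segment $r=\norm{x-a}$ and opposite segment $s=(1-\rho^2)/r$ is reached from $v$ and from $-v$. Combining the factor $2$, the solid-angle density $q=(r+s)/(2r^{d-1})$ from Lemma~\ref{lem:solid} and \eqref{eq:chord}, and the selection probability, the induced density is
\[
  h_\alpha(x)=\frac{r^{\alpha}}{r^{d-1}}\cdot\frac{r+s}{r^\alpha+s^\alpha}.
\]
We require $h_\alpha\propto r^{-\xi}$ on $\Sd$ for some $\xi>0$.

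\textbf{Reduction to a one-variable condition.} As $x$ ranges over $\Sd$, $r$ ranges over the full interval $[1-\rho,1+\rho]$, with $s=M/r$ and $M=1-\rho^2$. Writing $W(r)=(r^\alpha+s^\alpha)/(r+s)$, exactness means that $r^{\alpha-d+1+\xi}/W(r)$ is constant on this interval. The function $W$ is invariant under $r\leftrightarrow s$. Hence $r^{\beta}$, with $\beta=\alpha-d+1+\xi$, must also be invariant under $r\leftrightarrow M/r$. Since $r\neq s$ except at $r=\sqrt M$, this forces $\beta=0$, that is $\alpha=d-1-\xi$, and $W$ must be constant.

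\textbf{Where $W$ is constant.} Set $r=\sqrt M\,e^{t}$. Then
\[
  W\propto \frac{\cosh(\alpha t)}{\cosh t}
\]
on an interval of $t$ around $0$ that has positive length because $\rho>0$. This ratio is constant on such an interval only if $\alpha=\pm1$. To see this, compare second-order Taylor coefficients at $t=0$: they give $\alpha^2=1$. Conversely, $\alpha=\pm1$ clearly gives a constant. So the only possibilities are:
\begin{itemize}
\item $\alpha=-1$, giving $\xi=d$. The selection is proportional to $r^{-1}\propto s$, the opposite segment, which is Theorem~\ref{thm:pkbd}.
\item $\alpha=1$, giving $\xi=d-2$. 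The selection is proportional to the own segment, and $W\equiv1$. This requires $d\ge3$ for $\xi>0$.
\end{itemize}

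\textbf{The normalizing constant.} For $\alpha=1$, the induced density is exactly $h_1(x)=r^{2-d}$, because the constant $W\equiv1$ introduces no extra factor. Since $h_1$ is the density of a genuine random point built from $V$ and the selection, it integrates to $1$ against $\upsilon_d$. This yields \eqref{eq:newtonconst}, equivalently $C_{d-2}\equiv1$. The case $\rho=0$ holds trivially.

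The main obstacle is the bookkeeping of the combinatorial factors: the two directions $\pm v$ and the solid-angle factor must combine to give exactly $h_\alpha$, with no leftover constant. This matters most in the $\alpha=1$ case, where the normalizing constant is read off directly from that bookkeeping. I would check it against the $\alpha=-1$ computation already given in the proof of Theorem~\ref{thm:pkbd}.
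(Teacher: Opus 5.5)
Your proof is correct. It shares the paper's skeleton: a general selection power, the induced density $r^{\alpha-d+1}/W(r)$, exactness if and only if the chord weight is constant, and $C_{d-2}\equiv1$ read off from the fact that the construction defines a probability law. Two sub-steps differ.

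\textbf{Deriving $\alpha=d-1-\xi$.} The paper sets the selection power to $\alpha=d-1-\xi$ at the outset and then only asks when $H_\alpha=W$ is constant. It never rules out that some mismatched power could still produce $r^{-\xi}$. Your observation closes this: $W$ is invariant under $r\leftrightarrow M/r$, so $r^\beta\propto W$ forces $\beta=0$. You therefore prove the ``exactly two'' claim with the selection power as a genuinely free parameter.

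\textbf{Showing constancy forces $\alpha=\pm1$.} The paper uses linear independence of $r^{\alpha},r^{-\alpha},r,r^{-1}$ on an interval. You instead substitute $t=\log(r/\sqrt M)$ and compare the $t^2$ coefficients of $\cosh(\alpha t)/\cosh t$ at the interior point $t=0$.

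\textbf{What each route buys.} The paper's version identifies the exponents in one line. It would also absorb your extra step, since matching $\{\alpha,-\alpha\}$ with $\{\beta+1,\beta-1\}$ forces $\beta=0$. Your version makes the chord symmetry visible as $t\mapsto-t$. It needs only local information at $t=0$, and it handles $\alpha=0$ without a separate remark.
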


For $d=2$, the second exponent would be $\xi=d-2=0$, which lies outside the
power-Saw family as defined above for $\xi>0$ and gives the uniform density.
Thus the second case in two dimensions is a uniform degeneracy rather than a
second nondegenerate power-Saw member.

\begin{proof}
Let \(m=1-\rho^2\) and let \(s=m/r\) be the opposite segment length. The
chord-direction weight is \(H_\alpha(r)=(r^\alpha+m^\alpha r^{-\alpha})/
(r+m r^{-1})\). Since \(r\) ranges over a nondegenerate interval, uniform lines
are exact only if \(H_\alpha(r)\) is constant there. If \(H_\alpha(r)\equiv c\),
then \(r^\alpha+m^\alpha r^{-\alpha}=c r+c m r^{-1}\). The functions \(r^\beta\) with distinct exponents \(\beta\) are linearly
independent on any interval. Therefore the two exponents on the left,
\(\alpha\) and \(-\alpha\), must be exactly the two exponents on the right,
\(1\) and \(-1\). Thus \(\alpha=\pm1\). For \(\alpha=-1\),
that is \(\xi=d\), \(H_\alpha(r)=1/m\), and selection proportional to \(r^{-1}\)
is selection proportional to the opposite segment, recovering
Theorem~\ref{thm:pkbd}. For \(\alpha=1\), that is \(\xi=d-2\),
\(H_\alpha(r)=1\), and selection is proportional to the own segment. Therefore
\(C_{d-2}(\rho)=1\), giving \eqref{eq:newtonconst}. Indeed, the construction
itself defines a probability law, and in the case $\alpha=1$ the endpoint
density produced by the calculation is exactly $r^{-(d-2)}$, without an
additional $\rho$-dependent factor. Its integral must therefore equal one,
which establishes the normalization directly from the chord construction.
\end{proof}

We call the $\xi=d-2$ member the Newton kernel-based (NKB) distribution, with density
$\norm{x-\rho\mu}^{-(d-2)}$ with respect to $\upsilon_d$. Two features make it a
natural companion to the PKB distribution. The kernel
$\norm{x-a}^{-(d-2)}$ is, up to a constant, the Newtonian potential in $\R^d$,
and \eqref{eq:newtonconst} is the classical statement that a uniform spherical
shell produces a constant potential in its interior, equal to the value on the
surface. The density therefore carries no normalizing constant at all, for every
$\rho$ and every $d\ge3$. It is generated by the same one-line procedure as the
Poisson case, the only change being that the endpoint is chosen in proportion to
its own distance from $a$ rather than the opposite one, so the cost is again one
uniform direction, one Bernoulli draw and $O(d)$ arithmetic. 

The limiting behavior is nevertheless quite different from that of the
PKB distribution. As \(\rho\to1^-\), the NKB density tends to
the cusp \(\norm{x-\mu}^{-(d-2)}\), which is singular at \(\mu\) but still
integrable on \(\Sd\). Thus increasing \(\rho\) sharpens the density without
producing a Dirac limit. The same pointwise singularity also prevents ordinary
unrestricted maximum likelihood from being well behaved. For \(d>2\), moving
\(a=\rho\mu\) towards any observation \(x_i\), say \(a=r x_i\) with
\(r\to1^-\), sends the likelihood contribution of that observation to infinity,
and there is no normalizing constant depending on \(\rho\) to offset it. For
\(d=2\) the exponent is zero and the family is uniform. We therefore regard the
NKB law mainly as a structurally useful companion to the PKB
distribution, and potentially as a regularized or bounded-concentration kernel,
rather than as a drop-in replacement for it. The spherical Cauchy exponent
\(\xi=2(d-1)\) is not one of the two exact uniform-line exponents, so this
construction also does not supersede the M\"obius sampler of
\citet{KatoMcCullagh2020}.

\section{Numerical study}

We repeat the simulation design of \citet{SablicaHornikLeydold2023}. For each
combination of dimension $d\in\{3,\allowbreak5,\allowbreak10,\allowbreak20,
\allowbreak50,\allowbreak100,\allowbreak200,\allowbreak1000\}$, concentration
$\rho\in\{0.01,\allowbreak0.1,\allowbreak0.25,\allowbreak0.4,\allowbreak0.6,
\allowbreak0.75,\allowbreak0.9,\allowbreak0.99\}$ and sample size
$n\in\{10,\allowbreak100,\allowbreak1000,\allowbreak10^4\}$ we generated $n$ variates and recorded the average
generation time over $300$ replications. Our new direct sampling method,
Algorithm~\ref{alg:pkbd}, is compared against the better, in each cell, of the two
methods recommended by \citet{SablicaHornikLeydold2023}: the angular central
Gaussian rejection sampler and the projected Saw sampler that uses the Tinflex
adaptive rejection algorithm \citep{GilksWild1992, BottsHormannLeydold2013,
LeydoldTinflex2021}. All three samplers were implemented in \texttt{C++} through Rcpp
\citep{EddelbuettelFrancois2011} and called from R
\citep{RCoreTeam}. The angular central Gaussian and Tinflex/Saw implementations
were taken from the \texttt{circlus} package \citep{SablicaHornikGruen2025}, and
the proposed direct sampler is available in the
\href{https://github.com/lsablica/rPKB}{\texttt{rPKB}} package \citep{rPKB}.
All benchmarks were run on the WU cluster under Ubuntu 22.04.5 using R 4.5.1
and GCC 12.2.0. Figure~\ref{fig:speedup} reports the speed-up factor, the
ratio of the better competing time to the time of Algorithm~\ref{alg:pkbd}.

\begin{figure}[!htbp]
\centering
\includegraphics[width=0.94\textwidth]{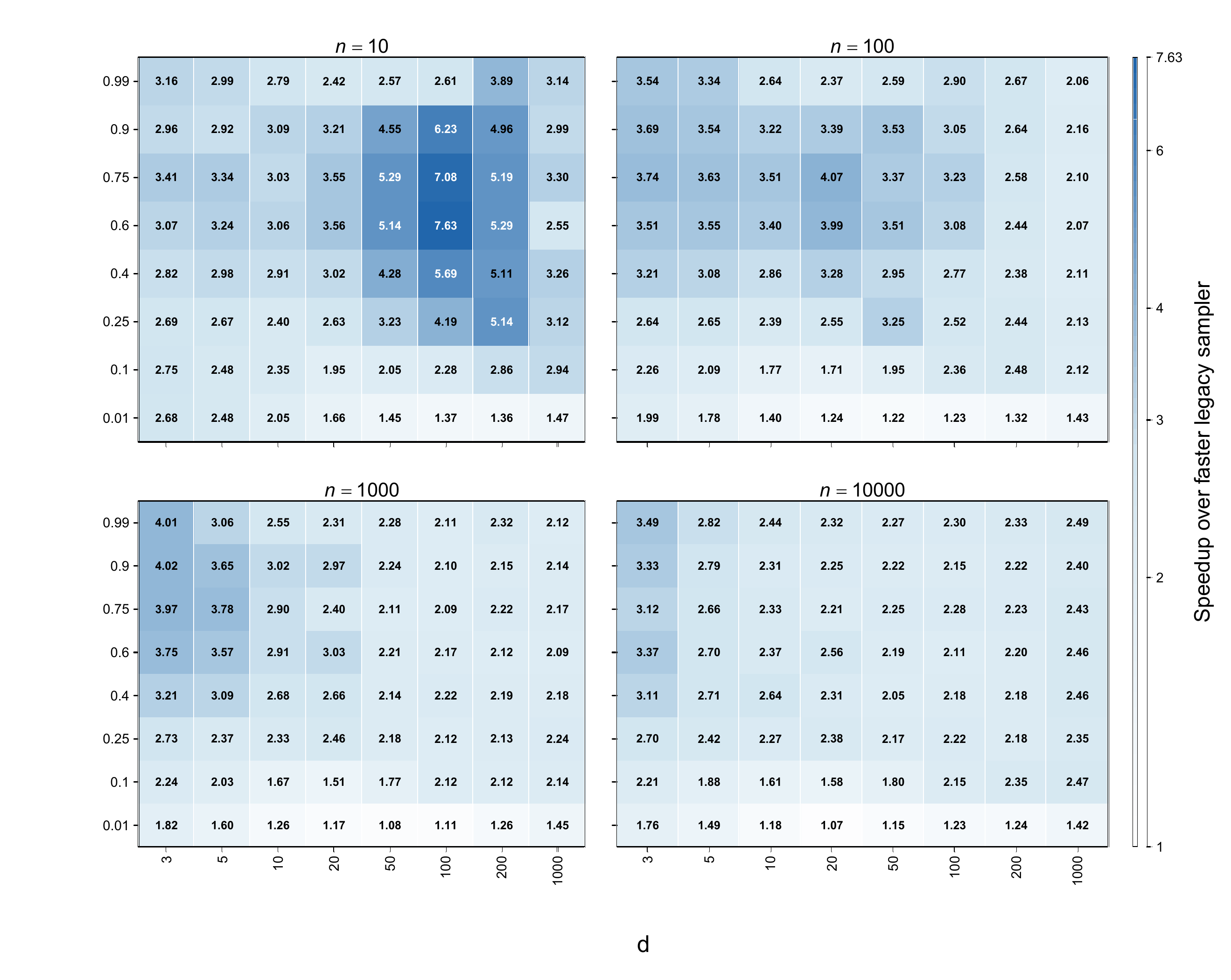}
\caption{Speed-up factor of the direct sampler relative to the better, in each
cell, of the angular central Gaussian and projected Saw samplers of
\citet{SablicaHornikLeydold2023}, for sample sizes $n=10,100,1000,10 000$. Each
panel shows dimension $d$ on the horizontal axis and concentration $\rho$ on the
vertical axis. Darker cells indicate larger speed-ups, and every cell exceeds
one.}
\label{fig:speedup}
\end{figure}

The direct sampler is faster in every one of the $256$ cells, by factors between
$1.1$ and $7.6$. The structure of the gains is informative and follows from the
two competitors having complementary weaknesses. The angular central Gaussian
sampler has a rejection constant that grows like $\sqrt{d}$ and, for fixed $\rho$,
behaves like $\sqrt{e}\,\rho(1-\rho^2)^{1/2}\sqrt{d}$ as $d\to\infty$
\citep{SablicaHornikLeydold2023}. The factor $\rho(1-\rho^2)^{1/2}$ vanishes as
$\rho\to0$ and as $\rho\to1$ and is largest near $\rho=1/\sqrt{2}$, so this
envelope is nearly exact at the extremes of the concentration range and least
efficient in its middle. The projected Saw sampler, by contrast, has a bounded
rejection constant but pays a univariate set-up cost that is essentially
independent of $n$, so its per-draw cost is dominated by set-up at small $n$ and
becomes negligible at large $n$. The better competitor is therefore usually the angular
central Gaussian sampler at small $n$ and the projected Saw sampler at large
$n$. Table~\ref{tab:pkbd-timings-n10} reports the corresponding average times
for $n=10$.

\begin{table}[!htbp]
\centering
\setlength{\tabcolsep}{3.5pt}
\renewcommand{\arraystretch}{0.92}

\caption{Average times in milliseconds for $n=10$, for different dimensions $d$
and parameters $\rho$. The rows compare the new direct sampler with the two legacy
samplers, ACG and Tinflex/Saw.}
\label{tab:pkbd-timings-n10}

\begin{tabular}{
  @{} c l *{8}{D{.}{.}{3}} @{}
}
\toprule
& & \multicolumn{8}{c}{$d$} \\
\cmidrule(l){3-10}
$\rho$ & Sampler
& \multicolumn{1}{c}{3}
& \multicolumn{1}{c}{5}
& \multicolumn{1}{c}{10}
& \multicolumn{1}{c}{20}
& \multicolumn{1}{c}{50}
& \multicolumn{1}{c}{100}
& \multicolumn{1}{c}{200}
& \multicolumn{1}{c}{1000} \\
\midrule

\multirow{3}{*}{$0.01$}
& Direct (ours)         & 0.004 & 0.004 & 0.005 & 0.008 & 0.015 & 0.026 & 0.060 & 0.24 \\
& ACG         & 0.010 & 0.011 & 0.011 & 0.013 & 0.021 & 0.035 & 0.081 & 0.35 \\
& Tinflex/Saw & 0.13  & 0.16  & 0.16  & 0.17  & 0.18  & 0.21  & 0.40  & 0.66 \\
\addlinespace[0.25em]

\multirow{3}{*}{$0.1$}
& Direct (ours)         & 0.004 & 0.004 & 0.006 & 0.008 & 0.015 & 0.026 & 0.050 & 0.24 \\
& ACG         & 0.010 & 0.011 & 0.013 & 0.016 & 0.030 & 0.059 & 0.14  & 1.30 \\
& Tinflex/Saw & 0.13  & 0.15  & 0.16  & 0.17  & 0.18  & 0.21  & 0.27  & 0.71 \\
\addlinespace[0.25em]

\multirow{3}{*}{$0.25$}
& Direct (ours)         & 0.004 & 0.004 & 0.006 & 0.008 & 0.015 & 0.026 & 0.050 & 0.25 \\
& ACG         & 0.010 & 0.012 & 0.014 & 0.021 & 0.048 & 0.11  & 0.29  & 2.94 \\
& Tinflex/Saw & 0.13  & 0.17  & 0.16  & 0.18  & 0.19  & 0.22  & 0.26  & 0.77 \\
\addlinespace[0.25em]

\multirow{3}{*}{$0.4$}
& Direct (ours)         & 0.004 & 0.004 & 0.005 & 0.008 & 0.015 & 0.026 & 0.050 & 0.24 \\
& ACG         & 0.011 & 0.013 & 0.015 & 0.024 & 0.063 & 0.15  & 0.43  & 4.52 \\
& Tinflex/Saw & 0.13  & 0.16  & 0.15  & 0.16  & 0.18  & 0.21  & 0.25  & 0.79 \\
\addlinespace[0.25em]

\multirow{3}{*}{$0.6$}
& Direct (ours)         & 0.004 & 0.004 & 0.005 & 0.008 & 0.015 & 0.026 & 0.051 & 0.29 \\
& ACG         & 0.012 & 0.014 & 0.016 & 0.029 & 0.076 & 0.20  & 0.55  & 5.82 \\
& Tinflex/Saw & 0.14  & 0.15  & 0.16  & 0.17  & 0.19  & 0.20  & 0.27  & 0.74 \\
\addlinespace[0.25em]

\multirow{3}{*}{$0.75$}
& Direct (ours)         & 0.004 & 0.004 & 0.006 & 0.008 & 0.015 & 0.030 & 0.051 & 0.24 \\
& ACG         & 0.013 & 0.014 & 0.017 & 0.029 & 0.078 & 0.21  & 0.56  & 5.96 \\
& Tinflex/Saw & 0.16  & 0.16  & 0.16  & 0.16  & 0.19  & 0.34  & 0.26  & 0.80 \\
\addlinespace[0.25em]

\multirow{3}{*}{$0.9$}
& Direct (ours)         & 0.004 & 0.004 & 0.005 & 0.008 & 0.015 & 0.030 & 0.052 & 0.26 \\
& ACG         & 0.012 & 0.013 & 0.017 & 0.026 & 0.067 & 0.18  & 0.45  & 4.83 \\
& Tinflex/Saw & 0.16  & 0.17  & 0.18  & 0.18  & 0.19  & 0.33  & 0.26  & 0.77 \\
\addlinespace[0.25em]

\multirow{3}{*}{$0.99$}
& Direct (ours)         & 0.004 & 0.004 & 0.005 & 0.008 & 0.015 & 0.032 & 0.050 & 0.24 \\
& ACG         & 0.012 & 0.013 & 0.015 & 0.019 & 0.038 & 0.085 & 0.19  & 1.80 \\
& Tinflex/Saw & 0.20  & 0.20  & 0.20  & 0.19  & 0.21  & 0.35  & 0.28  & 0.76 \\

\bottomrule
\end{tabular}
\end{table}

These two facts explain Figure~\ref{fig:speedup} and Table~\ref{tab:pkbd-timings-n10}. At small $n$ the comparison is mostly against the angular central Gaussian sampler, and the speed-up tracks its
rejection constant. It is modest near $\rho=0$ and $\rho=1$, where that sampler
is almost exact, and large in the middle of the concentration range and in
higher dimensions, where its $\sqrt{d}$ rejection cost bites. The largest
factor, $7.6$, occurs at $n=10$, $d=100$ and $\rho=0.6$. At large \(n\), the comparison is mostly with the projected Saw sampler, whose set-up is amortized. The speed-up remains about two across the grid, reflecting a smaller \(O(d)\) constant and no rejection loop. Because the direct sampler has neither a rejection loop nor a set-up phase, it beats whichever competitor is locally stronger, most decisively at moderate concentration, where neither competitor is near its best.

\section{Discussion}

The Poisson kernel-based distribution joins the short list of spherical
distributions that can be sampled exactly and in closed form. The generator
needs no envelope, no rejection loop, no tuning and no set-up, and its cost is
linear in the dimension, so it should be the default choice in the simulation,
bootstrap and Bayesian computations for which the distribution was introduced.
The same construction supplies the Newton kernel-based distribution as a
constant-free companion.

The result is therefore best viewed as a statistical use of a classical
chord geometry. It replaces proposal-based simulation by a direct
construction from one random line and one Bernoulli draw, and it explains why
the PKB and NKB distributions are exceptional members of the power-Saw
family. Other power-Saw exponents require additional weighting and are not
pursued here. The \texttt{rPKB} package and a script reproducing the experiments
and Figure~\ref{fig:speedup} are available
in a public \href{https://github.com/lsablica/rPKB}{GitHub repository}.

\section*{Funding}
The work of Lukas Sablica was supported by the Anniversary Fund of the
Oesterreichische Nationalbank (OeNB), grant number 19079.

\ifjcgs
\section*{Disclosure Statement}

The authors report there are no competing interests to declare.

\section*{Declaration of the use of generative AI and AI-assisted technologies}

During the preparation of this work the authors used Anthropic's Claude Opus 4.8 to assist with spell-checking, final language review and polishing. After using these tools the authors reviewed and edited the content as necessary and take full responsibility for the content of the publication.

\section*{Data and Code Availability Statement}

The \texttt{rPKB} package and the benchmark output underlying
Figure~\ref{fig:speedup} are available in a public
\href{https://github.com/lsablica/rPKB/tree/918f34fa139d71d14e1dcd19b13eaa9f922ce1b1}{GitHub repository}.
The benchmark script is supplied as online supplementary material.

\section*{Supplementary Material}

\textbf{Benchmark script:} \texttt{PKB\_sampler\_comparison.R} reproduces the
numerical comparison and Figure~\ref{fig:speedup} using the \texttt{rPKB}
package available from GitHub. (R script)
\else
  
\fi

\bibliographystyle{agsm}
\bibliography{references}
\end{document}